\documentclass[10pt, conference, letterpaper]{IEEEtran}
\IEEEoverridecommandlockouts

\newtheorem{theorem}{Theorem}
\newtheorem{assumption}{Assumption}
\newtheorem{lemma}{Lemma}

\newtheorem{example}{Example}

\usepackage{siunitx}
\usepackage{comment}
\usepackage{cite}
\usepackage{amsmath,amssymb,amsfonts}

\usepackage{multirow}
\usepackage{algorithm}
\usepackage{algpseudocode}
\usepackage{graphicx}
\usepackage[caption=false,font=footnotesize]{subfig}
\usepackage{textcomp}
\usepackage{xcolor}
\def\BibTeX{{\rm B\kern-.05em{\sc i\kern-.025em b}\kern-.08em
    T\kern-.1667em\lower.7ex\hbox{E}\kern-.125emX}}

\begin{document}

\title{ImpactHO: Importance-Aware KV Cache Transfer for Multi-User Edge LLM Handover}

\author{
    \IEEEauthorblockN{Minwoo Kim, Soochang Song, Namyoon Lee, Bang Chul Jung, and Yongjune Kim}
 
    \thanks{
        M. Kim, S. Song, N. Lee, and Y. Kim are with the Department of Electrical Engineering, Pohang University of Science and Technology (POSTECH), Pohang 37673, South Korea (e-mail: \{minwoo.kim, ssc6351, nylee,  yongjune\}@postech.ac.kr)
        
        B. C. Jung is with the Department of Electrical and Computer Engineering, Ajou University, Suwon 16499, South Korea (e-mail: bcjung@ajou.ac.kr).
    }
}

\maketitle

\begin{abstract}
    Edge LLMs must preserve inference continuity when a user hands over between edge nodes, requiring key-value (KV) cache transfer to the target node.
    However, simultaneous handovers saturate the backhaul, preventing full cache delivery within the mobility-imposed transfer window.
    Rather than allocating bandwidth as if all cache entries were equally valuable, we order each user's KV cache by importance and transmit only its most informative fraction, turning token-level sparsity into communication savings.
    We cast the transfer as a multi-user backhaul allocation problem that maximizes average accuracy across users.
    Each user's partial-cache accuracy serves as its utility: a sigmoid that fits measurements on the RULER benchmark with $R^2 > 0.99$ across models and context lengths.
    Because importance ordering front-loads the high-value entries, the concave region of the accuracy curve spans nearly the entire cache.
    Our proposed allocator keeps served users within this region, making each per-slot allocation problem convex.
    The optimum is derived via a weighted water-filling solution that generalizes information-theoretic water-filling and enables online scheduling.
    The proposed allocator attains over 93.7\% average accuracy in a 500ms transfer window, within 0.5pp of the full-cache ceiling, and reaches 98.2--99.5\% of a clairvoyant upper bound.
\end{abstract}

\begin{IEEEkeywords}
Edge LLM, KV cache transfer, resource allocation, token communications,  multi-access edge computing.
\end{IEEEkeywords}

\section{Introduction}
    Next-generation communication networks are evolving beyond simple data delivery into an infrastructure that supports real-time artificial intelligence (AI) inference for applications such as large language model (LLM) services, autonomous driving, and robotics~\cite{Saad2019vision, Letaief2019roadmap}.
    These services share two requirements: ultra-low latency and seamless mobility.
    The first is difficult to meet with remote cloud inference, whose round-trip time alone can exceed the millisecond-scale latency budget of real-time AI~\cite{Mao2017survey}.
    Multi-access edge computing (MEC) addresses this by hosting computation at the network edge, close to users.
    For AI workloads, this paradigm has taken shape as Edge AI and, more recently, Edge LLM, in which model inference runs directly on edge nodes~\cite{Li2020edge, Kholmatov2025AoRA}.

    However, unlike cloud-based serving, edge inference is directly exposed to user mobility.
    Whenever a moving user is reassigned to another edge node, inference continuity must be preserved across the handover.
    Although virtual machine or container migration has been studied for MEC handover~\cite{Wang2019dynamic, Machen2018live, Ngo2020coordinated}, modern AI models carry tens to hundreds of gigabytes of weights (e.g., \SI{16}{GB} for 8B-parameter models), making such migration impractical and pushing handover latency to tens of seconds.

    Since the model itself is already provisioned at each edge node, only the user-specific context needs to move.
    Transformer-based LLMs store this context as a key-value (KV) cache, and recent work proposes treating the KV cache itself as the transfer target~\cite{Lee2026low-latency}.
    Unlike transferring the raw text or tokens, which requires re-prefilling at the target node, transferring the KV cache lets the target node resume inference immediately.
    However, the KV cache is substantially larger than the raw text it encodes, so concurrent long-context handovers still saturate the backhaul.   

    \begin{figure}[t] 
        \centering
        \includegraphics[width=0.42\textwidth]{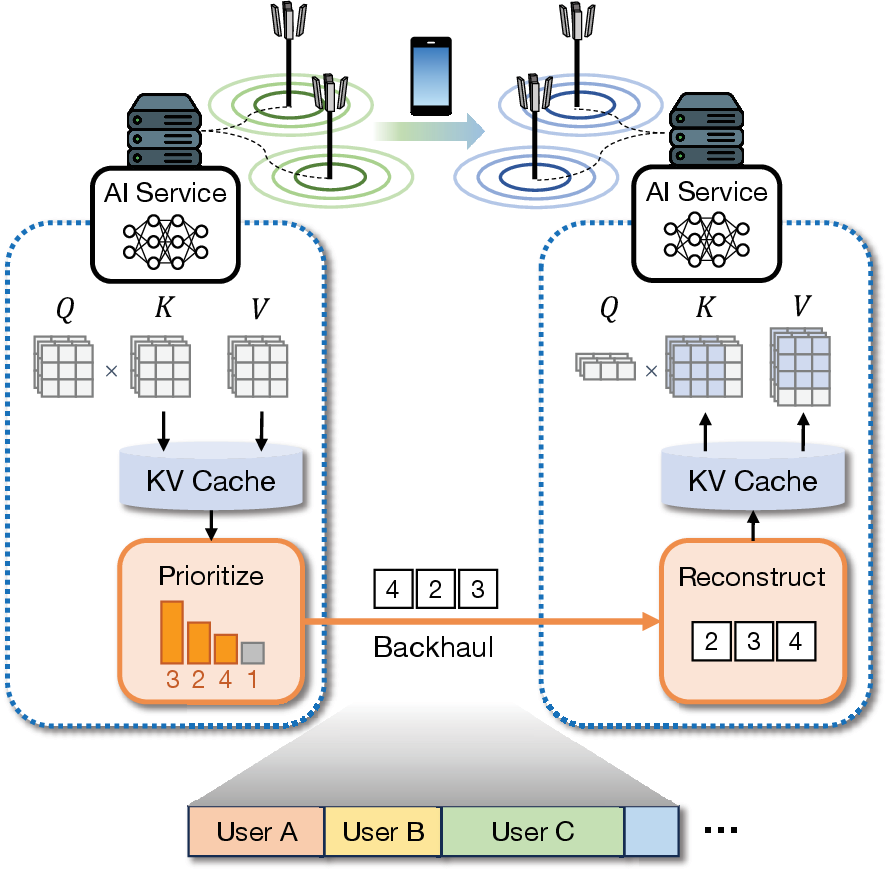}
        \caption{Concept of ImpactHO framework.}
        \label{fig:concept}
        \vspace{-4mm}
    \end{figure}

    A natural remedy is to transfer only part of the KV cache.
    Recent studies on KV cache eviction have shown that not all entries are equally important at inference time: LLMs can retain most of their accuracy using only a small fraction of their cache~\cite{Xiao2024efficient, Zhang2023H2O, Li2024snapkv, Kim2025kvzip, Kim2026fastkvzip}.
    These results, however, address a single model's memory budget, not how a shared backhaul should be divided across users.

    To address this gap, we propose \textbf{ImpactHO} (\textbf{Imp}ortance-\textbf{a}ware KV \textbf{c}ache \textbf{t}ransfer for multi-user \textbf{h}and\textbf{o}ver), a framework that orders each user's cache by importance and transmits its most informative entries first, as illustrated in Fig.~\ref{fig:concept}.    
    To capture how accuracy grows with the delivered cache, we model each user's accuracy as a function of its received cache fraction and treat this curve as a per-user utility. 
    The scheduling task then becomes a utility-maximization problem: allocate the limited backhaul bandwidth across users so as to maximize the average accuracy. 
    
    Importance ordering plays a structural role beyond prioritizing informative cache.
    Partial-cache accuracy is intrinsically concave once enough context has arrived for the model to become operable. 
    By front-loading the highest-value entries, importance ordering pulls this operable point (i.e., the inflection point of the accuracy curve) down to only a small cache fraction.
    The concave region therefore spans nearly the entire cache, and every user can be brought into it at low cost.
    An admission rule then lifts every served user past this point, so each per-slot allocation reduces to a convex subproblem, whose optimum takes a weighted water-filling form with each user's cache size acting as its weight.
    
    We summarize our main contributions as follows:
    \begin{itemize}
        \item \emph{ImpactHO framework}:
        We formulate importance-ordered partial KV cache transfer as a multi-user backhaul allocation problem for edge LLM handover, repurposing per-entry importance scores from KV cache eviction to set the transmission order.
        Modeling each user's partial-cache accuracy as its utility casts the scheduling task as a utility-maximization problem over the shared bandwidth.
    
        \item \emph{Empirical sigmoid characterization of partial-cache accuracy}:
        On the RULER benchmark~\cite{Hsieh2024ruler}, partial-cache accuracy follows a sigmoid ($R^2>0.99$) robustly across context lengths, models, and ordering schemes.
        Importance ordering pulls its inflection point down to about \SI{6.5}{\%} of the cache, so the concave region spans nearly the entire cache.
        This validates our utility assumption and provides a reusable parametric foundation for KV cache-aware networking research.        
    
        \item \emph{Two-regime allocator with low overhead}:
        We derive the per-slot optimum in closed form as a weighted water-filling solution over a feasible region expanded by importance ordering.
        It reduces to classical water-filling as a special case and runs in near-linear time per slot.
        Coupled with an admission rule that suppresses starvation under heavy load, the allocator consistently outperforms baselines and reaches near-full accuracy faster than target-side re-prefill of even a single 8K context.
    \end{itemize}

\section{Related Work}\label{sec:related}
    The KV cache stores the key and value tensors of previously processed tokens at each Transformer layer and attention head, avoiding their recomputation during autoregressive decoding.
    However, because its memory footprint grows linearly with the context length, the KV cache has become a major bottleneck for long-context inference.    
    To alleviate this bottleneck, eviction-based methods~\cite{Xiao2024efficient, Zhang2023H2O, Li2024snapkv, Kim2025kvzip, Kim2026fastkvzip} estimate the importance of cached entries and retain only the most important ones, thereby reducing memory consumption while largely preserving model accuracy.

    Among these, KVzip~\cite{Kim2025kvzip} derives query-agnostic per-(layer, head, token) importance scores from a context-reconstruction pass, reaching near-lossless accuracy with less than \SI{30}{\%} of the cache.    
    Fast KVzip~\cite{Kim2026fastkvzip} distills these scores into lightweight gating modules offline, achieving comparable accuracy without the inference-time scoring overhead.
    While prior work uses these importance scores only for memory reduction, we repurpose them to set the transmission order over the backhaul.
    Since the future query is unavailable at handover time, query-agnostic importance estimation is particularly well suited to our setting; hence, we adopt Fast KVzip.
    Nevertheless, our framework can accommodate any scoring method that provides a per-entry importance ranking.

    Beyond the memory pressure addressed by eviction, moving the cache between nodes is itself a bottleneck. 
    In disaggregated datacenter serving, DistServe~\cite{Zhong2024distserve} and Splitwise~\cite{Patel2024splitwise} place prefill and decode on separate instances, requiring the KV cache to be transferred between them.
    However, these systems are designed for datacenter environments with high-bandwidth interconnects and do not address mobility-driven handovers or contention among multiple migrating users over bandwidth-constrained backhaul links. 
    More closely related to our setting, CacheGen~\cite{Liu2024cachegen} streams a compressed KV cache adaptively, but focuses on loading a single context rather than allocating shared backhaul bandwidth across concurrent handovers. 
    Such KV compression is complementary to our allocation framework: it can be applied before transmission to further reduce the required backhaul traffic. 
    
    While sharing our motivation to preserve inference continuity under mobility, a recent edge LLM handover scheme ctHO~\cite{Lee2026low-latency} adopts a fundamentally different formulation. 
    It minimizes the maximum handover delay across users, couples resource allocation with target-side computation, and requires the handover timing of all users to be known in advance.
    In contrast, we transmit only a selected fraction of each KV cache and maximize inference accuracy subject to an online per-slot backhaul budget, thereby decoupling backhaul allocation from target-side compute.
    ImpactHO supports \emph{anytime} inference because the target can resume decoding after any completed prefix of the KV cache stream using the partial cache received up to that point. 
    In contrast, re-prefill must complete prefill over the full context, while ctHO must complete its prescribed cache-transfer and target-side re-prefill operations before decoding can resume. 
    Hence, neither baseline supports inference before full context restoration.
    Table~\ref{tab:structural} summarizes these structural differences, and Section~\ref{subsec:exp-comparison} provides a quantitative comparison.

    \begin{table}[t]
    \centering
    \small
    \renewcommand{\arraystretch}{1.2}
    \caption{Structural comparison of handover strategies.}
    \label{tab:structural}
    \begin{tabular}{|l|c|c|c|}
        \hline
        & Re-prefill
        & ctHO~\cite{Lee2026low-latency}
        & \textbf{ImpactHO} \\
        \hline\hline
        Target-GPU prefill     & Full            & Partial       & \textbf{None} \\ \hline
        Scheduling             & N/A             & Offline       & \textbf{Online} \\ \hline
        Anytime property       & No              & No            & \textbf{Yes} \\
        \hline
    \end{tabular}
    \vspace{-4mm}
    \end{table}

    Our importance-ordered KV cache transfer shares the principle of task-oriented and semantic communications, which prioritize information according to its relevance to the given task rather than bit-level fidelity~\cite{Gunduz2022beyond, Im2024attention}.
    A recent instantiation of this principle is token communications, in which tokenized multimodal source signals are transmitted and reconstructed at the receiver~\cite{Qiao2025token, Qiao2025token-domain}.
    Our setting differs in what is transmitted: not source tokens, but the per-token KV cache entries constituting the internal inference state of a deployed LLM.
    The target node consumes these entries directly to resume inference, without reconstructing the original source. 
    By sending only a high-value fraction in descending order of measured importance, we extend this paradigm to mainstream LLM serving.
    
    Our framework also relates to network utility maximization (NUM), which shares a fixed capacity by maximizing the sum of per-user utilities.
    For elastic flows with concave utilities, NUM admits globally optimal distributed solutions~\cite{Kelly1998rate}.
    Real-time inelastic flows are a closer match to our setting: their utilities are sigmoidal, and this shape makes the problem nonconvex, so standard dual algorithms can miss a feasible global optimum.
    Prior work has addressed this regime in two ways: sub-threshold flows can self-regulate, turning off when a persistently low net utility signals infeasibility~\cite{Lee2005non-convex}; alternatively, link capacity can be provisioned large enough to guarantee convergence of the distributed algorithm~\cite{Hande2007distributed}.

    In both approaches, the application determines a fixed utility curve, and the network can only select an operating point along that curve.
    In our setting, by contrast, the utility is empirically measured and depends not only on the user's context length but also on the order in which KV cache entries are transmitted.
    Our approach is therefore to shape the utility itself through importance ordering, so that the concave regime covers nearly the entire operating range.
    
    \section{System Model and Problem Formulation}\label{sec:system}

    \subsection{Framework Overview}\label{subsec:framework}
    
    The overall concept of ImpactHO is illustrated in Fig.~\ref{fig:concept}.
    While serving a user, the source node assigns an importance score to each KV cache entry and maintains the entries in descending order of importance.
    Consequently, any prefix of the ordered cache contains the highest-ranked entries for the corresponding cache size.
    When a handover occurs, the source node streams these entries to the target node in importance order over the shared backhaul.
    If the finite transfer window expires before the entire cache is delivered, this ordering ensures that the target has received the most valuable subset available under the transfer budget and can resume inference using the resulting partial cache.    
    When multiple handovers concurrently contend for the shared backhaul, ImpactHO allocates the per-slot transmission budget among users to maximize their aggregate inference accuracy.
    The remainder of this section formalizes the KV cache size and importance-induced accuracy utility and formulates the resulting per-slot resource-allocation problem.

    \subsection{Multi-user Edge LLM Handover}\label{subsec:system-multi}
    
    We consider an edge node hosting a shared LLM that concurrently serves multiple users, reflecting a typical edge deployment in which GPU memory constraints favor sharing a single model instance across users. 
    Each user $i$ maintains an individual context stored at the edge node as a KV cache of size 
    \begin{equation} \label{eq:kv_size}
        L_i = 2 n_L n_H d_h q T_i \quad \text{bits},
    \end{equation}
    where the factor of two accounts for the key and value tensors, $n_L$ is the number of transformer layers, $n_H$ is the number of KV heads, $d_h$ is the head dimension, $q$ is the number of bits used to represent each scalar, and $T_i$ is the context length.
    $L_i$ depends on the transformer architecture and grows linearly with the context length $T_i$.
    For example, the KV cache of Qwen3-8B~\cite{Yang2025qwen3} occupies approximately \SI{1.2}{GB}~(\SI{9.66}{Gb}) for an 8K-token context.
    Once a handover is triggered, we assume that $L_i$ remains fixed throughout the KV cache transfer.
    
    An edge inference node may serve users across multiple radio cells; hence, inference-state migration is required only when a user moves beyond the coverage of its current node.
    To preserve inference continuity, the source node transfers the user's KV cache to the target node.
    In AI-RAN architectures for beyond-5G networks, edge inference nodes may be deployed at the distributed unit (DU) or central unit (CU) level and connected through fiber-based transport networks using interfaces such as F1 or X2/Xn~\cite{Kundu2025AI-RAN}.
    For brevity, we collectively refer to the transport path between the source and target nodes as the \emph{backhaul}.
    KV cache transfers share a backhaul bandwidth of $B$ bits/s among the active handover users, where $B$ denotes the bandwidth allocated to KV cache transfer, not the raw physical link capacity.

    We model backhaul allocation in discrete time slots of duration $\Delta t$.
    At the beginning of each slot, the scheduler observes the set of active handover users and determines their allocations, which remain fixed throughout the slot.
    A handover request arriving during a slot becomes eligible for scheduling at the beginning of the next slot.
    We focus on the given slot in which $N$ users concurrently undergo handover and compete for the available backhaul bandwidth $B$, i.e., per-slot budget of $B \Delta t$ bits.
    We denote by $\mathcal{N}_t$ the index set of these active handover users, so that $|\mathcal{N}_t| = N$.
    We consider a single transfer direction, with $B$ the bandwidth provisioned for that direction; the $N$ contending users are those handing over in it, and the reverse direction forms a symmetric instance.
    
    \subsection{Importance-aware KV Cache Ordering}\label{subsec:system-ordering}
    
    We assign an importance score to each individual KV cache entry, defined as the key--value pair associated with one token at a specific layer and KV head, rather than to an entire token.    
    Each entry occupies $2d_hq$ bits (\SI{0.5}{KB} for Qwen3-8B in BF16), and user $i$'s cache therefore contains $n_L n_H T_i$ entries.
    The source node orders these entries by decreasing importance and transmits them in that order.
    Consequently, any delivered prefix is the top-ranked subset of its size: when a fraction $x_i\in[0,1]$ of user $i$'s cache has arrived, the target node holds the highest-ranked $x_i L_i$ bits.
    As $x_i$ increases, newly transmitted entries have progressively lower importance, naturally inducing diminishing returns.
    This observation motivates the concave accuracy-utility model introduced in Section~\ref{subsec:system-utility}.

    In this paper, we obtain the importance ordering from the gating-network scores of Fast KVzip~\cite{Kim2026fastkvzip}, chosen for its near-state-of-the-art retention quality at low computational cost.
    Importance-ordered transmission does introduce implementation overhead beyond standard cache transfer, which we quantify and discuss in Section~\ref{subsec:exp-discussion}.
    Our optimization framework is not restricted to Fast KVzip or even to importance-based ordering: any fixed ordering, including random ordering, can be accommodated as long as its induced utility is monotonically increasing and concave over the operating region considered; Section~\ref{sec:utility} examines this condition in detail.    

    \subsection{Utility Function} \label{subsec:system-utility}
    
    We define the per-user utility $A_i(y)$ as the inference accuracy for user $i$ when the target node has received a fraction $y \in [0, 1]$ of the user's importance-ordered KV cache. 
    Once the target node has received a sufficient fraction, increasing $y$ adds progressively lower-ranked entries to those already delivered, so accuracy is expected to improve with diminishing marginal gains.
    This motivates the following assumption, on which the optimal-allocation analysis of Section~\ref{sec:importance-aware} relies.    

    \begin{assumption}[Concave Operating Region]
    \label{asm:operable}
    For each user $i$, there exists a concavity anchor $\tau_i \in (0,1)$ such
    that $A_i(y)$ is continuously differentiable, strictly increasing, and strictly concave on $[\tau_i,1]$. 
    We restrict the analysis to $y \in [\tau_i,1]$. 
    \end{assumption}

    Operationally, $\tau_i$ denotes the minimum cache fraction at which the target enters the concave operating region considered for resource allocation.
    Below $\tau_i$, the received KV cache entries may be insufficient for reliable task performance, and $A_i$ need not be concave.
    Above $\tau_i$, transmitting additional, progressively lower-ranked entries yields diminishing accuracy gains.
    Section~\ref{sec:utility} shows that the measured accuracy curves exhibit operating regions consistent with Assumption~\ref{asm:operable}.
    
    \subsection{Optimization Problem}\label{subsec:system-optimization}
    At the beginning of a slot, the target node has received a fraction $x_i$ of user $i$'s KV cache.
    The scheduler assigns user $i$ a \emph{normalized} cache-delivery rate $b_i$, in cache fractions per second, so that the received fraction at the end of the slot is
    \begin{equation}
        y_i \triangleq x_i + b_i\Delta t.
    \end{equation}
    Because the full cache contains $L_i$ bits, this allocation corresponds to a transmission rate of $L_i b_i$ bits/s.    
    
    We maximize the aggregate accuracy improvement achieved during the slot:
    \begin{equation}
        \label{eq:objective}
        \begin{aligned}
        \operatorname*{maximize}_{\{b_i\}}\quad & \sum_{i=1}^{N}\bigl[\,A_i(y_i) - A_i(x_i)\,\bigr]\\
        \text{subject to}\quad
        & \sum_{i=1}^{N} L_i\, b_i \le B, \\
        & b_i \ge 0,\quad \tau_i \le y_i \le 1,\quad \forall\, i \in \{1,\dots,N\}.
        \end{aligned}
    \end{equation}    
    The first constraint enforces that the aggregate transmission rate does not exceed the available backhaul bandwidth $B$.
    The remaining constraints prevent the received cache fraction from decreasing, restrict each user to the concave operating region specified in Assumption~\ref{asm:operable}, and cap the received fraction at the full cache.
    Because $b_i\ge0$ is equivalent to $y_i\ge x_i$, the two lower bounds on $y_i$ can be combined as $\tilde{x}_i \triangleq \max\{x_i,\tau_i\}$, yielding the individual feasible interval $y_i\in[\tilde{x}_i,1]$.

    Moreover, $x_i$ is fixed at the beginning of the slot, so $\sum_i A_i(x_i)$ is constant and can be omitted from the objective. 
    Using $b_i=(y_i-x_i)/\Delta t$, Problem~\eqref{eq:objective} is equivalent to    
    \begin{equation}
    \label{eq:obj-std}
    \begin{aligned}
    \operatorname*{minimize}_{\{y_i\}}\quad
        & -\sum_{i=1}^{N} A_i(y_i) \\
    \text{subject to}\quad
        & \sum_{i=1}^{N} \frac{L_i(y_i-x_i)}{\Delta t} \le B, \\
        & \tilde{x}_i \le y_i \le 1,
          \quad \forall i\in\{1,\ldots,N\}.
    \end{aligned}
    \end{equation}
    The objective over the feasible set is convex and all constraints are affine. 
    We solve this convex optimization problem in Section~\ref{sec:importance-aware}.

    The formulation requires every active user to reach its concave operating region by the end of the slot. This is possible only when the available backhaul budget is sufficient to raise every user currently below its concavity anchor $\tau$.

    \begin{lemma}[Feasibility]
    \label{lem:feasibility}
    Problem~\eqref{eq:obj-std} is feasible if and only if
    \begin{equation}
    \label{eq:bmin}
    B \;\ge\; B_{\min} \;\triangleq\;
    \frac{1}{\Delta t}\sum_{i=1}^{N} L_i\,[\tau_i - x_i]^{+},
    \end{equation}
    where $[\,\cdot\,]^{+} \triangleq \max\{\cdot,\,0\}$.
    \end{lemma}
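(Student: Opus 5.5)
The argument rests on one observation: among all feasible points, the component-wise minimal choice $y_i=\tilde{x}_i$ uses the least bandwidth. Since the budget constraint is monotone in each $y_i$, feasibility reduces to checking this single point against $B$. Throughout we use $L_i>0$, $\Delta t>0$, $x_i\le 1$ and $\tau_i\in(0,1)$, so that $\tilde{x}_i=\max\{x_i,\tau_i\}\le 1$. Hence each individual interval $[\tilde{x}_i,1]$ is nonempty and the box constraints can always be met on their own.

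The key identity is $\tilde{x}_i - x_i = \max\{x_i,\tau_i\}-x_i = [\tau_i-x_i]^{+}$. It follows that
\begin{equation*}
\frac{1}{\Delta t}\sum_{i=1}^N L_i(\tilde{x}_i-x_i) = B_{\min}.
\end{equation*}

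\emph{Sufficiency.} Suppose $B\ge B_{\min}$. We exhibit the candidate $y_i=\tilde{x}_i$ for all $i$. It satisfies $\tilde{x}_i\le y_i\le 1$ by the nonemptiness noted above. By the identity, its aggregate rate equals $B_{\min}\le B$, so it is feasible for Problem~\eqref{eq:obj-std}.

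\emph{Necessity.} Suppose some $\{y_i\}$ is feasible. Then $y_i\ge\tilde{x}_i$ and $L_i/\Delta t>0$ for each $i$. Multiplying and summing gives
\begin{equation*}
B\ge \sum_i \frac{L_i(y_i-x_i)}{\Delta t} \ge \sum_i \frac{L_i(\tilde{x}_i-x_i)}{\Delta t} = B_{\min}.
\end{equation*}

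There is no substantive obstacle. The one point needing care is the implicit requirement $x_i\le 1$ (received fractions never exceed the full cache) together with $\tau_i<1$ from Assumption~\ref{asm:operable}. Without these, $[\tilde{x}_i,1]$ could be empty and the equivalence would fail, so these conditions should be stated explicitly at the start of the proof.
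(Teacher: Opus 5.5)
Your proof is correct and follows the paper's argument essentially verbatim. Necessity comes from summing $y_i-x_i\ge\tilde{x}_i-x_i=[\tau_i-x_i]^{+}$ against the budget, and sufficiency from exhibiting $y_i=\tilde{x}_i$, with your explicit requirement $x_i\le 1$, $\tau_i<1$ (so that $[\tilde{x}_i,1]$ is nonempty) making precise a condition the paper leaves implicit.
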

    \begin{IEEEproof}
    Any feasible $\{y_i\}$ satisfies $y_i \ge \tilde{x}_i$, hence $y_i-x_i \ge \max\{x_i,\tau_i\}-x_i = [\tau_i-x_i]^+.$
    It must therefore transmit at least $\sum_iL_i[\tau_i-x_i]^+$ bits during the slot, proving the necessity of~\eqref{eq:bmin}.
    Conversely, if~\eqref{eq:bmin} holds, choosing $y_i=\tilde{x}_i$ for every user satisfies both the individual bounds
    and the backhaul constraint.    
    \end{IEEEproof}

    When $B<B_{\min}$, not all active users can reach their concave operating regions within the slot, and we instead invoke the fallback policy described in Section~\ref{sec:importance-aware}.

    \section{Utility Function Characterization}\label{sec:utility}
    
    Recall from Section~\ref{subsec:system-utility} that $A_i(y)$ denotes the inference accuracy for user $i$ as a function of its received cache fraction $y$, and that Assumption~\ref{asm:operable} posits a concavity anchor $\tau_i$ beyond which $A_i(y)$ is concave. 
    We hypothesize that $A_i(y)$ exhibits a sigmoidal shape: accuracy remains low when the received cache is insufficient, increases rapidly once enough informative entries have been delivered, and eventually saturates because subsequently delivered entries are progressively lower ranked and provide smaller marginal gains.
    This section evaluates this hypothesis through systematic measurements.
    In addition to supporting the analysis in Section~\ref{sec:importance-aware}, the resulting characterization provides a compact parametric utility model for importance-ordered KV cache transfer.
    
    \subsubsection{Empirical Setup}
    We test this hypothesis by fitting four canonical sigmoid families---algebraic, logistic, error function (erf), and arctan---to measured LLM accuracy across multiple context lengths.
    Although prior KV cache eviction studies have reported the qualitative dependence of LLM accuracy on the retained cache fraction~\cite{Zhang2023H2O, Kim2026fastkvzip}, our objective is to identify a parametric model that both accurately represents the measured utility and permits tractable resource-allocation analysis.
    We use Qwen3-8B~\cite{Yang2025qwen3} as the primary LLM, and additionally evaluate Qwen3-14B and Llama-3.1-8B-Instruct~\cite{Grattafiori2024llama3} to assess robustness across model architectures and sizes.
    As the benchmark, we use RULER~\cite{Hsieh2024ruler}, a long-context evaluation suite comprising diverse tasks at multiple context lengths.
    We compare the importance ordering produced by Fast KVzip with random ordering as a baseline.
    
    \subsubsection{Fitting Results}
    
    \begin{figure}[t]
        \centering
        \subfloat[Fast KVzip, 8K context]{\includegraphics[width=0.48\linewidth]{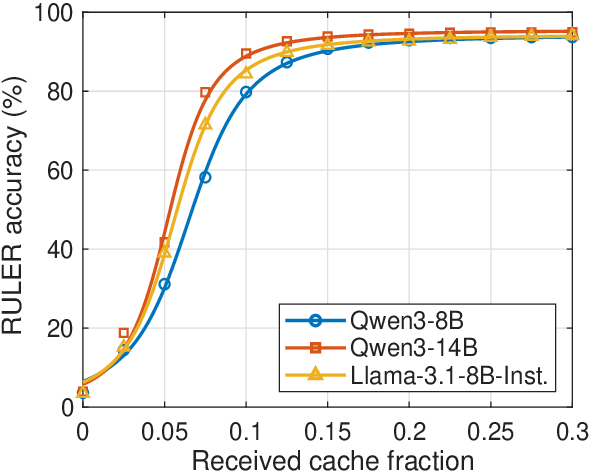}\label{fig:fit-8K}}
        \hfill
        \subfloat[Random ordering, 8K context]{\includegraphics[width=0.48\linewidth]{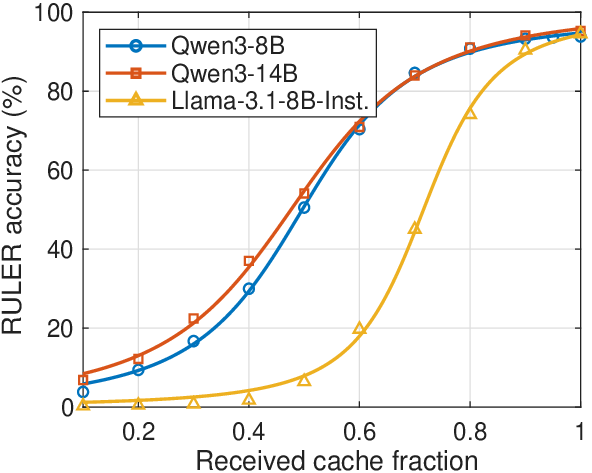}\label{fig:fit-random}}
        \caption{Algebraic sigmoid fits to the measured RULER accuracy as a function of the received KV cache fraction at 8K context length for Qwen3-8B, Qwen3-14B, and Llama-3.1-8B-Instruct: (a) Fast KVzip and (b) random ordering.
        Markers indicate measured RULER accuracy, and solid curves show the algebraic sigmoid fits defined in~\eqref{eq:algebraic}.
        }
        \label{fig:fit}
        \vspace{-4mm}
    \end{figure}
    
    Fig.~\ref{fig:fit} shows that the algebraic sigmoid fits well for Qwen3-8B at 8K ($R^2>0.999$, RMSE $=0.936$).
    The algebraic sigmoid remains accurate across the 4K, 8K, and 16K context lengths, achieving $R^2>0.999$ in every case. 
    Thus, the observed sigmoidal behavior is not specific to a single context length.  
    The remaining three families fit the same measurements equally well, so the sigmoidal shape is a property of the measured accuracy.
    The fitted inflection points are stable across the four functional families, lying within the narrow range $\tau\in[0.064,0.067]$. 
    Moreover, the algebraic sigmoid achieves $R^2>0.99$ for both Qwen3-14B and Llama-3.1-8B-Instruct under both Fast KVzip and random ordering. 
    Collectively, these results support the existence of a concave operating region consistent with Assumption~\ref{asm:operable} across the considered models, context lengths, and transmission orderings.
        
    \subsubsection{Choice of Functional Form}
    
    A practical utility model should provide both an accurate empirical fit and a tractable marginal-utility inversion.
    Although all four fitted families yield optimal solutions, we adopt the \emph{algebraic} sigmoid because its marginal-utility inverse involves only arithmetic operations and radicals:
    \begin{align}\label{eq:algebraic}
        A_i(y) &= \frac{M_i}{2}\!\left(\,
          1 + \frac{k_i(y - \tau_i)}{\sqrt{1 + k_i^2(y - \tau_i)^2}}\, \right),\\   \label{eq:algebraic-deriv}
        A_i'(y) &= \frac{M_i\, k_i}
          {2\,\bigl(1 + k_i^2(y - \tau_i)^2\bigr)^{3/2}}.
    \end{align}
    
    The parameters $M_i$, $k_i$, and $\tau_i$ represent the upper accuracy asymptote, transition sharpness, and concavity anchor, respectively. 
    These parameters may vary across users because of differences in context length and task characteristics.

    \section{Importance-Aware Resource Allocation via Weighted Water-Filling}\label{sec:importance-aware}

    \subsection{Optimal Weighted Water-Filling Structure}

    The solution to Problem~\eqref{eq:obj-std} exhibits a \emph{weighted water-filling} structure, analogous to generalized
    water-filling solutions for constrained resource allocation~\cite{Palomar2005Practical}.
    A single dual price coordinates all users, but their water levels differ according to their cache sizes and marginal-utility curves.
    Each user's allocation fills the positive gap between its current progress $x_i$ and its price-dependent water level.

    \begin{theorem}[Weighted Water-Filling] \label{thm:closed-form}
        Under Assumption~\ref{asm:operable}, if $B\ge B_{\min}$, the primal optimum is
        \begin{equation}\label{eq:opt-solution}
        y_i^\star = \max\{x_i,W_i(\lambda^\star)\},
        \qquad b_i^\star=\frac{y_i^\star-x_i}{\Delta t},
        \end{equation}
        where 
    \begin{equation}
    \label{eq:general-waterlevel}
    W_i(\lambda)
    \triangleq
    \begin{cases}
    1,
    & s_i(\lambda)\le A_i'(1),\\[1mm]
    (A_i')^{-1}\bigl(s_i(\lambda)\bigr),
    & A_i'(1)<s_i(\lambda)<A_i'(\tau_i),\\[1mm]
    \tau_i,
    & s_i(\lambda)\ge A_i'(\tau_i).
    \end{cases}
    \end{equation}
    Here, we set $s_i(\lambda)\triangleq\lambda L_i/\Delta t$, and $\lambda^\star\ge0$ is a dual-optimal variable associated with the backhaul-rate constraint.
    When the budget binds, $\lambda^\star$ can be chosen to satisfy
        \begin{equation}\label{eq:budget-eq}
        \sum_i \frac{L_i(y_i^\star-x_i)}{\Delta t}=B.
        \end{equation}
    \end{theorem}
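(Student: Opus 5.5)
Problem~\eqref{eq:obj-std} is convex: the objective $-\sum_i A_i(y_i)$ is strictly convex on the box $\prod_i[\tilde{x}_i,1]$ because $[\tilde{x}_i,1]\subseteq[\tau_i,1]$ and Assumption~\ref{asm:operable} holds there, and the constraints are affine. The plan is therefore to write the KKT conditions, which are necessary and sufficient here, and solve them coordinatewise for a fixed multiplier $\lambda$. By Lemma~\ref{lem:feasibility}, $B\ge B_{\min}$ makes the feasible set nonempty. It is also compact, so an optimum exists, and strict concavity makes the primal optimum unique. Since all constraints are affine, Slater's condition in its refined (affine) form holds whenever the problem is feasible, so strong duality holds and a dual-optimal $\lambda^\star\ge0$ exists.

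\textbf{Decomposition.} I would dualize only the backhaul constraint and keep the box constraints explicit. The Lagrangian is $\sum_i\bigl[-A_i(y_i)+\lambda L_i(y_i-x_i)/\Delta t\bigr]-\lambda B$. For fixed $\lambda$ it separates into scalar problems
\begin{equation*}
\min_{y_i\in[\tilde{x}_i,1]}\; -A_i(y_i)+s_i(\lambda)\,y_i .
\end{equation*}
Each scalar problem is strictly convex, and its derivative $-A_i'(y)+s_i(\lambda)$ is strictly increasing on $[\tau_i,1]$ because $A_i'$ is strictly decreasing there. Hence its unique minimizer is the projection onto $[\tilde{x}_i,1]$ of the unconstrained stationary point on $[\tau_i,1]$. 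I would first solve the problem on the larger interval $[\tau_i,1]$, splitting into three cases:
\begin{itemize}
\item if $s_i\le A_i'(1)$, the derivative is nonpositive throughout, so the minimizer is $1$;
\item if $s_i\ge A_i'(\tau_i)$, the derivative is nonnegative throughout, so the minimizer is $\tau_i$;
\item otherwise, continuity and strict monotonicity of $A_i'$ give a unique interior root $(A_i')^{-1}(s_i)$.
\end{itemize}
This reproduces exactly $W_i(\lambda)$ in~\eqref{eq:general-waterlevel}.

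\textbf{Projection onto the lower bound.} I then need to pass from $[\tau_i,1]$ to $[\tilde{x}_i,1]$. For a one-dimensional strictly convex function, the minimizer over a subinterval $[\tilde{x}_i,1]\subseteq[\tau_i,1]$ equals $\max\{\tilde{x}_i,W_i\}$. Since $W_i\ge\tau_i$, this gives $\max\{\tilde{x}_i,W_i\}=\max\{x_i,\tau_i,W_i\}=\max\{x_i,W_i\}$, which is~\eqref{eq:opt-solution}. Saddle-point optimality under strong duality then shows that $y_i^\star=\max\{x_i,W_i(\lambda^\star)\}$ is the primal optimum.

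\textbf{Budget equality.} Complementary slackness gives $\lambda^\star\bigl(\sum_iL_i(y_i^\star-x_i)/\Delta t-B\bigr)=0$, so when $\lambda^\star>0$ the equality~\eqref{eq:budget-eq} holds. The step I expect to need the most care is the claim that ``$\lambda^\star$ can be chosen'' to satisfy~\eqref{eq:budget-eq} when the budget binds, including the degenerate case where the constraint is active but $\lambda^\star=0$ is also dual-optimal. Uniqueness of the primal optimum handles this: whichever dual-optimal $\lambda^\star$ is used, the resulting $y^\star$ is the same point, and if the budget binds at that point then~\eqref{eq:budget-eq} holds for it by definition. For the computational reading, I would also note that $\lambda\mapsto W_i(\lambda)$ is continuous and nonincreasing, so the total rate $g(\lambda)=\sum_iL_i(\max\{x_i,W_i(\lambda)\}-x_i)/\Delta t$ is continuous and nonincreasing. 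As $\lambda\to\infty$, $g(\lambda)$ tends to $B_{\min}\le B$, since each $W_i$ reaches $\tau_i$ and $\max\{x_i,\tau_i\}-x_i=[\tau_i-x_i]^+$. If $g(0)>B$, the intermediate value theorem then gives a root of $g(\lambda)=B$. That root satisfies the KKT conditions, so it is a valid dual-optimal choice, which justifies bisection.
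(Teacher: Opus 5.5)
Your proposal is correct and follows essentially the same Lagrangian/KKT route as the paper: one price $\lambda$ on the backhaul constraint, the three-branch water level obtained by inverting $A_i'$ on $[\tau_i,1]$, clipping at $x_i$, and the continuous non-increasing demand $g(\lambda)$ combined with the intermediate value theorem. Dualizing only the budget and projecting each scalar minimizer onto $[\tilde{x}_i,1]$ justifies the step from $W_i$ to $\max\{x_i,W_i\}$ more cleanly than the paper does, since the paper's multiplier $\mu_i$ sits on $\tilde{x}_i$ rather than $\tau_i$. One caveat: at $B=B_{\min}$ the refined Slater condition can fail, because the only feasible point is the corner $\tilde{x}$ of the box, but your constructive choice of $\lambda$ via $g$ already supplies a valid $\lambda^\star$ there.
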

    
    Since $A_i$ is strictly concave on $[\tau_i,1]$, its derivative is strictly decreasing on this interval, and the inverse in the second branch is well defined.
    These optimal solutions are derived via the Karush--Kuhn--Tucker (KKT) conditions of Problem~\eqref{eq:obj-std}; a formal proof is provided in Appendix~\ref{app:proof}.
    
    Let $z_i=L_i(y_i-x_i)$ denote the number of additional cache bits transmitted for user $i$. For every user satisfying $\tilde{x}_i<y_i^\star<1$,
    \begin{equation}
    \label{eq:marginal-equalization}
    \left.
    \frac{\mathrm{d}A_i(x_i+z_i/L_i)}{\mathrm{d}z_i}
    \right|_{z_i=z_i^\star}
    =
    \frac{A_i'(y_i^\star)}{L_i}
    =
    \frac{\lambda^\star}{\Delta t},
    \end{equation}
    where $z_i^\star=L_i(y_i^\star-x_i)$.
    Thus, the optimal allocation equalizes the marginal accuracy gain per additional transmitted cache bit across all interior users.

    Theorem~\ref{thm:closed-form} does not require a particular parametric utility family.
    Specializing the result to a given family requires only evaluating $(A_i')^{-1}$ for the interior branch, while the boundary cases are handled by~\eqref{eq:general-waterlevel}.
    We illustrate this using the algebraic and logistic sigmoids characterized in Section~\ref{sec:utility}.

    \begin{example}[Water Levels for Sigmoid Utilities]
    \label{ex:waterlevels}
    For the interior branch of~\eqref{eq:general-waterlevel}, the algebraic sigmoid in~\eqref{eq:algebraic} gives
    
    \begin{equation}\label{eq:waterlevel-algebraic}
    (A_i')^{-1}\bigl(s_i(\lambda)\bigr) = \tau_i+\frac{1}{k_i} \sqrt{\left(\frac{M_i k_i}{2 s_i(\lambda)}\right)^{2/3}-1}.
    \end{equation}
    For the logistic sigmoid $A_i(y)=M_i/\left(1+e^{-k_i(y-\tau_i)}\right)$, it gives
    \begin{equation}\label{eq:waterlevel-logistic}
    (A_i')^{-1}\bigl(s_i(\lambda)\bigr) = \tau_i+\frac{1}{k_i}\operatorname{arcosh}\left(\frac{M_i k_i}{2 s_i(\lambda)}-1\right).
    \end{equation}
    \end{example}
    
    We adopt the algebraic specialization because it combines strong empirical fit with efficient per-user allocation: once the dual price is determined, each user's optimal allocation can be evaluated using only arithmetic operations and radicals.
    When $B<B_{\min}$, the feasibility condition of Lemma~\ref{lem:feasibility} fails and the closed form above no longer applies; Section~\ref{subsec:importance-admission} handles this regime with an explicit admission policy.   
    
    \subsection{Fallback Policy and Resource Allocation Algorithm}
    
    \label{subsec:importance-admission}
    By Lemma~\ref{lem:feasibility}, Problem~\eqref{eq:obj-std} is feasible if and only if $B\ge B_{\min}$.
    When $B<B_{\min}$, no allocation can bring every active user into its concave region within the current slot.
    The scheduler must therefore invoke a fallback admission policy that prioritizes users under the insufficient backhaul budget.
    The benefit of such an explicit policy is evaluated empirically in Section~\ref{subsec:exp-sensitivity} (Fig.~\ref{fig:exp-ablation}(b)).
    As our default fallback, we adopt \emph{equalized bytes (EB)}.

    Let $\mathcal{S}\triangleq \{i\in\mathcal{N}_t:x_i<\tau_i\}$ denote the users that have not yet reached their concavity anchors. 
    EB equalizes the number of cache bits delivered during the slot among the users in $\mathcal{S}$.
    Since all users share the same slot duration, this is equivalent to initially assigning each user $i\in\mathcal{S}$ a physical backhaul rate of $B/|\mathcal{S}|$, corresponding to the relative allocation $b_i^{\mathrm{EB}}=\frac{B}{|\mathcal{S}|L_i}$.
    Each relative allocation is capped at the rate required to complete the remaining cache transfer within the slot, $(1-x_i)/\Delta t$. 
    If a user reaches this cap, the released backhaul rate is redistributed equally among the remaining unfinished users in $\mathcal{S}$. 
    Users outside $\mathcal{S}$ receive no bandwidth under the fallback policy.
    By prioritizing users that have not yet reached their anchors, EB mitigates sub-$\tau$ starvation under heavy load.

    \begin{algorithm}[!t]
        \caption{Importance-Aware Resource Allocation}
        \label{alg:main}
        \textbf{Input:} Active user index set $\mathcal{N}_{t}$ with user parameters $\{(L_{i}, x_{i}, \tau_{i}, A_{i})\}_{i\in\mathcal N_t}$, total bandwidth $B$, slot duration $\Delta t$. \\
        \textbf{Output:} Bandwidth allocation $\{b_{i}^\star\}_{i \in \mathcal{N}_{t}}$.
        \begin{algorithmic}[1]
            \State $\mathcal{S} \leftarrow \{i \in \mathcal{N}_{t} : x_{i} < \tau_{i}\}$ \Comment{Sub-$\tau$ users}
            \State $B_{\min} \leftarrow \frac{1}{\Delta t}\sum_{i \in \mathcal{S}} L_{i}(\tau_{i} - x_{i})$
            \If{$B < B_{\min}$} \Comment{Infeasibility}
                \State \Return $\textsc{EqualizedBytes}(\mathcal{S}, B, \Delta t)$
            \EndIf
            \State $\lambda^\star \leftarrow \textsc{FindWaterLevel}(\mathcal{N}_{t}, B, \Delta t)$
            \For{each $i \in \mathcal{N}_{t}$}
                \State $y_{i}^\star \leftarrow \max(x_{i}, W_{i}(\lambda^\star))$ \Comment{Optimal target}
                \State $b_{i}^\star \leftarrow (y_{i}^\star - x_{i})/\Delta t$
                \Comment{Bandwidth assignment}
            \EndFor
            \State \Return $\{b_{i}^\star\}_{i \in \mathcal{N}_{t}}$
        \end{algorithmic}        
    \end{algorithm}

    The complete two-regime allocator is summarized in Algorithm~\ref{alg:main}.
    It first evaluates the feasibility threshold $B_{\min}$.
    If $B<B_{\min}$, it invokes the EB fallback.
    Otherwise, it computes the optimal allocation characterized in Theorem~\ref{thm:closed-form} by locating a dual-optimal price $\lambda^\star$ through the bisection subroutine $\textsc{FindWaterLevel}$. 

    \textsc{FindWaterLevel} exploits the monotonicity of the aggregate demand $g(\lambda)\triangleq\sum_{i=1}^{N} L_i\bigl(y_i^\star(\lambda)- x_i\bigr)/\Delta t$, which is continuous and non-increasing in $\lambda$.
    If $g(0)\le B$, the entire remaining cache fits within the slot and $\lambda^\star=0$; otherwise, bisection solves $g(\lambda)=B$ to tolerance $\varepsilon$.
    Each evaluation of $g$ costs $O(|\mathcal{N}_t|)$, and the search range is bounded by $\bar{\lambda}$, the price at which $g(\bar{\lambda})=B_{\min}$, so the allocator runs in $O(|\mathcal{N}_t|\log(\bar{\lambda}/\varepsilon))$ time per slot, i.e., near-linear in the number of active users.

    \subsection{Connection to Classical Water-Filling}  

    When all users have the same utility function and cache size, i.e., $A_i(\cdot)=A(\cdot)$ and $L_i=L$, the user-specific water levels $W_i(\lambda)$ reduce to a common level $W(\lambda)$, and users differ only in their current progress $x_i$.
    When the backhaul constraint is active, let $W^\star\triangleq W(\lambda^\star)$ and $\mathcal{A}\triangleq \{i\in\mathcal{N}_t:x_i<W^\star\}$ denote the set of users receiving positive allocations.    
    The budget equality then gives
    \begin{equation}
        \label{eq:classical-wf}
        W^\star = \frac{1}{|\mathcal{A}|}\sum_{i\in\mathcal{A}} x_i + \frac{B\,\Delta t}{|\mathcal{A}|\,L},
        \quad
        b_i^\star = \frac{1}{\Delta t}\big[W^\star-x_i\big]^+ .
    \end{equation}
    This has the same algebraic form as classical water-filling~\cite{Cover2006elements, Palomar2005Practical}, with the channel-dependent ground level replaced by the progress floor $x_i$ and the total power replaced by the deliverable cache fraction $B\Delta t/L$.
    Then, a user with greater current progress has a higher ground level and requires less additional bandwidth to reach the common target level.   

    \section{Experimental Results}\label{sec:experimental}
    
    \subsection{Experiment Settings}\label{subsec:exp-settings}
    
    \subsubsection{Simulation environment}
    We follow the slotted system model of Section~\ref{sec:system}.
    At the beginning of each time slot of duration $\Delta t$, the scheduler observes all active users $\mathcal{N}_{t}$ and updates the bandwidth allocation, which remains fixed throughout the slot.
    Handover events arrive according to a Poisson process with mean rate $\rho$ (users per second).
    Upon each event, the context length of a new user is uniformly sampled from $\{4\mathrm{K},\,8\mathrm{K},\,16\mathrm{K}\}$ tokens to model heterogeneous context lengths.
    Each user attempts to complete its handover within an allowed transfer window $T_{\max}$.
    Unless otherwise noted, the default settings are $B=\SI{20}{Gbps}$, $\Delta t=\SI{100}{ms}$, $T_{\max}=\SI{500}{ms}$, $\rho=\SI{4}{users/s}$, and context lengths drawn uniformly from $\{4\mathrm{K},8\mathrm{K},16\mathrm{K}\}$, with each point averaged over \SI{100}{s} of simulation across 100 Monte-Carlo runs.

    We adopt $B=\SI{20}{Gbps}$ as a beyond-5G baseline.
    While over-provisioned networks ($\ge$\SI{50}{Gbps}) make allocation trivial, and severely limited ones reduce the problem to pure admission control, \SI{20}{Gbps} serves as a practical yet challenging operating point; its sensitivity is examined in Section~\ref{subsec:exp-main}.
    
    Similarly, the slot duration $\Delta t =$~\SI{100}{ms} balances the re-allocation period against scheduler overhead.
    The transfer window $T_{\max} =$ \SI{500}{ms} is a mobility-imposed budget for completing the transfer in the background, \emph{not} an inference-time latency: in soft-handover architectures, the target can begin loading the cache before the user formally migrates.
    When the window expires, the transfer is truncated, and the user resumes inference on whatever prefix has arrived.
    We analyze the sensitivity of both $\Delta t$ and $T_{\max}$ in Section~\ref{subsec:exp-sensitivity}.

    All experiments use Qwen3-8B~\cite{Yang2025qwen3} in BF16 precision with KV cache importance scored by Fast KVzip~\cite{Kim2026fastkvzip}.
    The per-user accuracy utility $A_i(\cdot)$ follows the algebraic sigmoid form of Eq.~\eqref{eq:algebraic}, with parameters $M_i, k_i, \tau_i$ fitted independently for each context length $\{4, 8, 16\}$K on the RULER benchmark~\cite{Hsieh2024ruler} as detailed in Section~\ref{sec:utility}.
    Because delivery proceeds in importance order, the received fraction $x_i$ in the simulation indexes exactly the top-$x_i$ fraction of Section~\ref{subsec:system-ordering}.
    Thus, the reported accuracy at each slot is obtained by evaluating this fitted sigmoid at the user's current $x_i$.
    To further validate that this sigmoid-based simulation tracks real inference under the scheduler-induced distribution of $x_i$, we also measure end-to-end accuracy under the same simulation settings.
    It is obtained by having each user perform inference using the directly delivered top-$x_i$ fraction of the cache.

    \subsubsection{Baselines and metrics}
    Throughout, \emph{Ours} refers to the proposed two-regime allocator.
    When the per-slot budget suffices to lift every sub-$\tau$ user to the concavity anchor (i.e., the feasibility regime), it applies the weighted water-filling of Theorem~\ref{thm:closed-form}.
    Otherwise, it falls back to an admission policy that distributes the budget equally among the contending sub-$\tau$ users.
    We compare the proposed allocator against three baselines, each of which applies a fixed rule in every slot, irrespective of feasibility.
    \begin{itemize}
        \item \textbf{Equal allocation (EA)}: The backhaul capacity is distributed uniformly across active users so that each user receives bandwidth $L_{i}b_{i} = B / |\mathcal{N}_{t}|$~\SI{}{bps}, independent of importance or current progress.
    
        \item \textbf{Winner-take-all (WTA)}: The greedy heuristic which assigns the slot budget to the user with the largest single-slot marginal accuracy gain. 
        We use the \emph{cascading} variant throughout: when the chosen user cannot absorb the full slot budget (its remaining cache is smaller than $B \cdot \Delta t$), the leftover rolls over to the next-best user within the same slot.
        The pure WTA, which discards any leftover capacity, is strictly dominated and is therefore omitted from the main comparisons.

        \item \textbf{Proportional-fair (PF)}: The \emph{unweighted} proportional-fair allocation that maximizes $\sum_i \log(L_i y_i)$ subject to the shared budget, without accessing the accuracy curves~\cite{Kelly1998rate}.
        In our single-resource setting, its KKT solution reduces to a common-water-level water-filling on cumulative received bits, which we solve directly. 
        It thus represents a fairness-oriented baseline that is agnostic to the sigmoidal accuracy utility.
    \end{itemize}
    
    These baselines span the fairness--throughput spectrum of resource allocation, with EA and WTA at its two extremes and PF occupying the middle ground~\cite{Kelly1998rate}.
    All schemes share the same importance ordering, differing only in allocation: among the baselines, only WTA consults the accuracy utility, while EA and PF are utility-agnostic.
    Additionally, we compare the proposed method against compute-based baselines: a target-side re-prefill strategy and the hybrid token and cache transmission design from~\cite{Lee2026low-latency}.

    \subsection{Main Results} \label{subsec:exp-main}
    \begin{figure}[!t]
        \centering
        \subfloat[Accuracy vs.  $\rho$.]{%
            \includegraphics[width=0.235\textwidth]{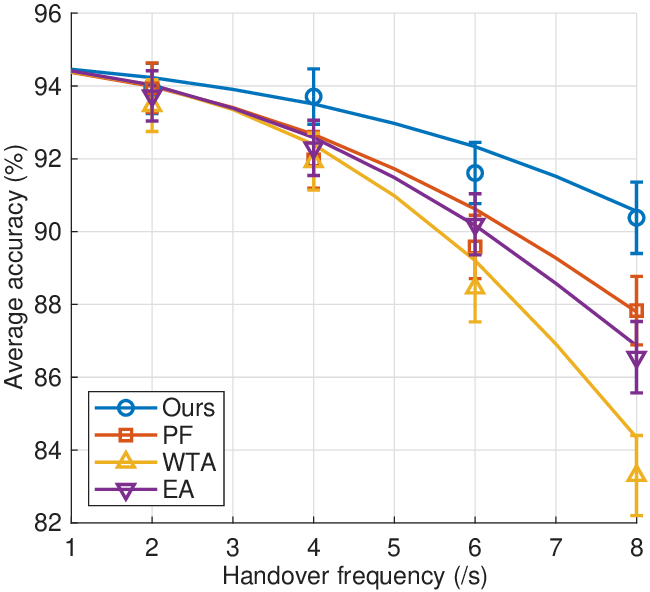}
            \label{fig:exp-rho}}
        \hfill
        \subfloat[Accuracy vs. $B$.]{%
            \includegraphics[width=0.235\textwidth]{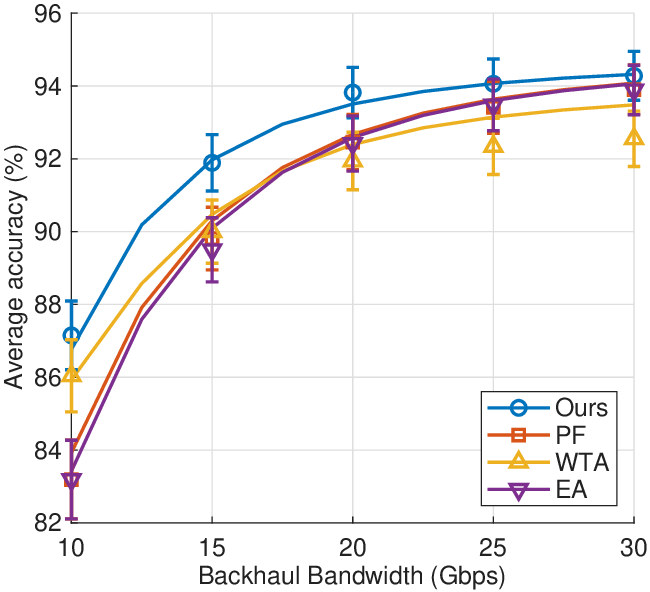}
            \label{fig:exp-bandwidth}}
        \caption{Performance comparison of the proposed allocator against EA, PF, and WTA baselines under edge-LLM handover conditions across (a) varying handover frequency $\rho$ and (b) varying backhaul bandwidth $B$.
        Solid curves are computed from the simulation, evaluating each user's accuracy at its delivered fraction $x_i$ through the fitted sigmoid utility $A_i(x_i)$.
        Markers additionally show end-to-end measured accuracy.
        Each marker averages the first $4000$ users of the trace, and error bars are 95\% confidence intervals over RULER samples.
        The close agreement between curves and markers confirms that the sigmoid-based simulation reliably predicts real inference accuracy.
        }
        \label{fig:exp-measured}
        \vspace{-4mm}
    \end{figure}
    
    We first evaluate the performance of the proposed allocator against the baselines along two axes that jointly govern the system load: the handover frequency $\rho$ and the backhaul bandwidth $B$. 
    In Fig.~\ref{fig:exp-measured}, the solid curves denote the simulation accuracy derived from the fitted sigmoid utility in Fig.~\ref{fig:fit}, whereas the markers with error bars indicate the measured end-to-end accuracy.
    
    Focusing on Fig.~\ref{fig:exp-measured}(a), $\rho$ is varied over \qtyrange{1}{8}{users/s}.
    As $\rho$ increases, the accuracy of all schemes decreases, since the fixed budget $B$ must be shared among more concurrent transfers.
    The proposed allocator outperforms the baselines throughout, and the gap widens, since heavy loads make importance-aware prioritization essential for maximizing accuracy with scarce resources.
    At the default $\rho=\SI{4}{users/s}$, it attains \SI{93.7}{\%} accuracy, within \SI{0.5}{pp} of the \SI{94.1}{\%} full-cache ceiling. 

    In Fig.~\ref{fig:exp-measured}(b), we sweep $B$ over~\qtyrange{10}{30}{Gbps}.
    The result demonstrates that the proposed allocator consistently outperforms all three baselines across the entire operating range.
    Among the baselines, the two extremes, EA and WTA, exhibit a characteristic crossover.
    In the bandwidth-scarce regime, WTA outperforms EA by concentrating the limited budget to ensure at least one user reaches the operable region, whereas EA leaves all users starved.
    Conversely, in the bandwidth-abundant regime, the ordering reverses: EA benefits from serving multiple users in parallel, while WTA wastes capacity by driving its current winner deep into the saturated tail of its utility before advancing to the next.
    PF weights each share by progress and cache size, yet trails the proposed allocator: blind to the accuracy curve, it over-serves saturated users and under-serves those near $\tau$.
    
    \subsection{Sensitivity and Ablation Analysis} \label{subsec:exp-sensitivity}
    
    \begin{figure}[t]
        \centering
        \subfloat[Accuracy vs. $\Delta t$.]{%
            \includegraphics[width=0.235\textwidth]{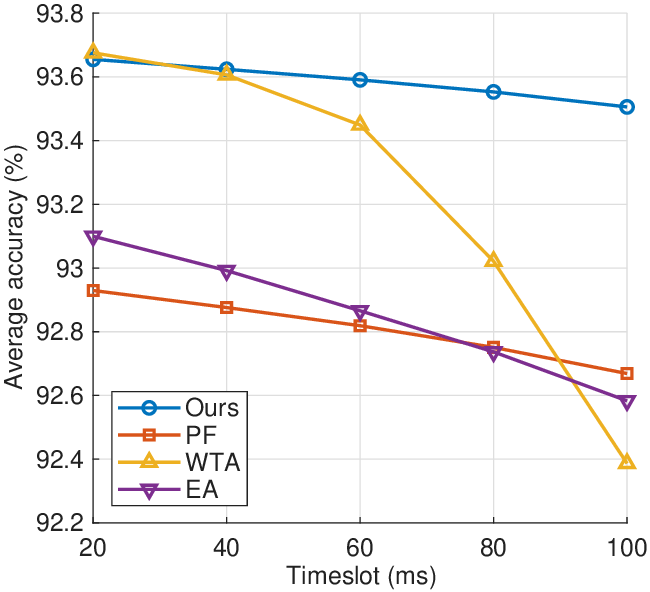}
            \label{fig:exp-timeslot}}
        \subfloat[Accuracy vs. $T_{\max}$.]{%
            \includegraphics[width=0.235\textwidth]{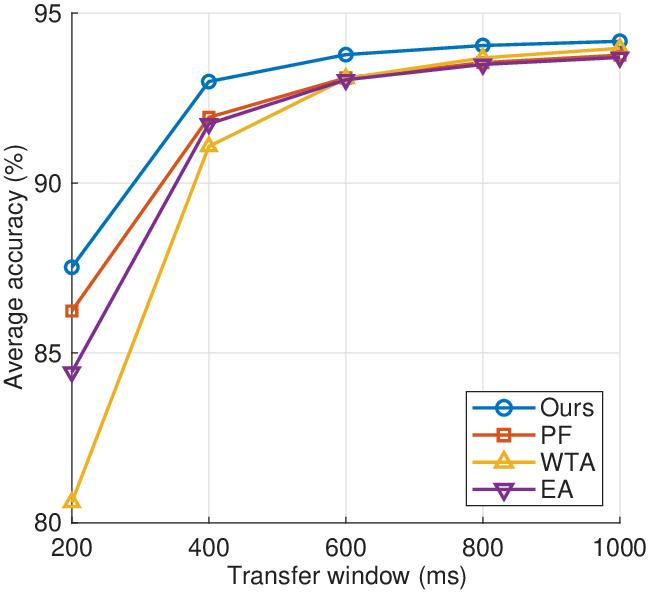}
            \label{fig:exp-window}}    
        \caption{
        Performance comparison of the proposed allocator against baselines across (a) varying slot duration $\Delta t$ and (b) varying transfer window $T_{\max}$.
        }
        \label{fig:exp-sensitivity}
        \vspace{-4mm}
    \end{figure}

    We next examine robustness to two system configuration parameters: the slot duration $\Delta t$, and the transfer window $T_{\max}$.
    Fig.~\ref{fig:exp-sensitivity}(a) sweeps $\Delta t$ over $20$--$100$\,ms with all other parameters at their defaults.
    The proposed allocator stays above all three baselines across the entire range. EA and WTA cross near $\Delta t\approx$~\SI{90}{ms}: as the slot gets coarser, the better fixed rule switches from one to the other.
    The proposed allocator, however, keeps its accuracy even at the coarsest slots, so it can re-allocate far less often without losing quality.

    In Fig.~\ref{fig:exp-sensitivity}(b), the transfer window is swept over \qtyrange{200}{1000}{ms}, showing the same crossover near \SI{400}{ms}: short windows penalize EA, while longer ones let its parallelism overtake WTA.
    The proposed allocator again outperforms throughout, so its advantage is not tied to a particular $T_{\max}$.

    \begin{figure}[!t]
        \centering
        \subfloat[Effect of ordering.]{%
            \includegraphics[width=0.235\textwidth]{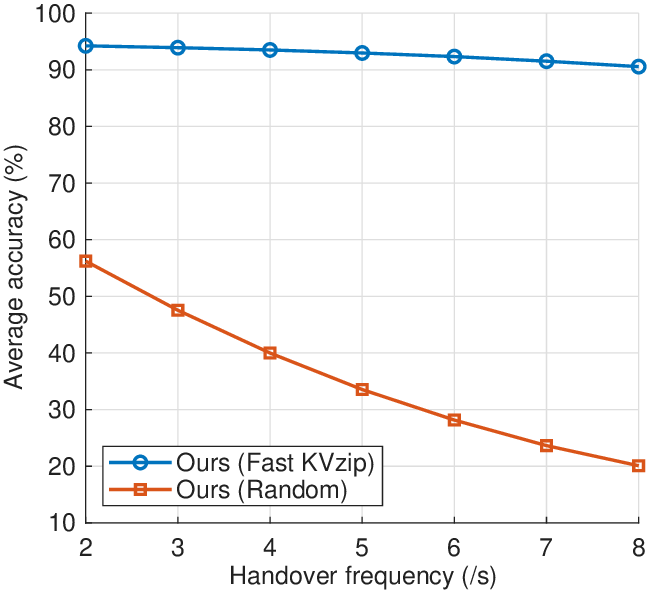}
            \label{fig:exp-ordering}}
        \hfill
        \subfloat[Admission control policy.]{%
            \includegraphics[width=0.235\textwidth]{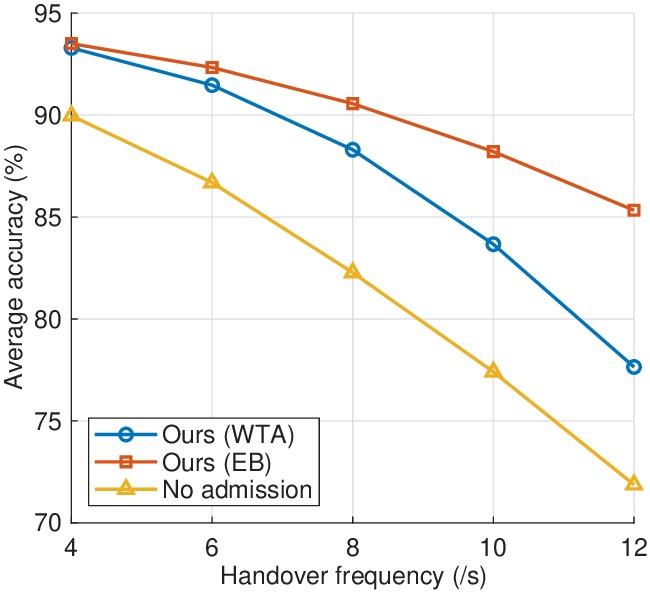}
            \label{fig:exp-admission}}

        \caption{The impacts of (a) importance-aware KV cache ordering relative to random ordering and (b) the admission control mechanism.
        }
        \label{fig:exp-ablation}
        \vspace{-4mm}
    \end{figure}

    To analyze the sources of the performance improvements, we isolate the two core mechanisms behind our method in Fig.~\ref{fig:exp-ablation}: importance-based cache ordering and the admission-control policy.
    For the latter, we compare the default policy (\emph{Ours (EB)}, which refers to equalized bytes) against the same allocator with cascading WTA as its fallback (\emph{Ours (WTA)}).
    Ours (WTA) is distinct from the standalone WTA baseline used above, which applies cascading WTA in every slot.
    
    Fig.~\ref{fig:exp-ablation}(a) sweeps $\rho$ for the proposed allocator under Fast KVzip and random ordering.
    Importance ordering keeps average accuracy above \SI{90}{\%} across the entire range, whereas random ordering falls sharply from \SI{56.7}{\%} at $\rho=2$ to \SI{20.2}{\%} at $\rho=8$.
    This demonstrates that importance-based ordering strengthens ImpactHO's anytime inference capability by transmitting higher-ranked cache entries first, thereby improving inference accuracy at intermediate transfer points.
    
    Fig.~\ref{fig:exp-ablation}(b) sweeps $\rho$ over \qtyrange{4}{12}{users/s} to stress the infeasibility regime.
    Both variants with admission control are compared against the same allocator with the fallback disabled (\emph{No admission}).
    At low $\rho$, the three curves coincide, as the feasibility regime dominates.
    As $\rho$ grows, both variants with admission control pull away sharply from the one without it, confirming that admission control is the dominant factor in this regime.
    We define the sub-$\tau$ starvation rate as the fraction of users whose received fraction never reaches $\tau$ within the transfer window $T_{\max}$.
    At $\rho=\SI{12}{users/s}$, the sub-$\tau$ starvation rate is \SI{20.8}{\percent} without admission versus \SI{0.65}{\percent} with EB.
    At our main slot duration ($\Delta t = \SI{100}{ms}$), EB consistently outperforms WTA, whose starvation rate is \SI{15.5}{\percent} versus EB's \SI{0.65}{\percent}.
    This is because WTA drives one user into the saturated tail beyond $\tau$ before advancing the next, whereas EB lets several progress toward operability.
    
    \subsection{Comparison with Compute-Based Baselines}
    \label{subsec:exp-comparison}
    \begin{table}[t]
    \centering
    \footnotesize 
    \renewcommand{\arraystretch}{1.2} 
    \setlength{\tabcolsep}{4pt}
    \caption{Per-user latency comparison of the compute-based baselines and the proposed method for Qwen3-8B.}
    \label{tab:user_latency}
    \begin{tabular}{|c|c|c|c|c|c|}
        \hline
        \multirow{2}{*}{Context}    & Re-prefill (ms) & Hybrid (ms) & Ours (ms) \\ \cline{2-4} 
          & NVIDIA RTX PRO 6000 & PRO 6000, \SI{20}{Gbps}   & \SI{20}{Gbps}  \\
        \hline\hline
        4K       & 300    & 211      & 229   \\ \hline
        8K       & 676     & 396      & 353    \\ \hline
        16K      & 1,566   & 694    & 590    \\ \hline\hline
        \textbf{Avg} & \textbf{847} & \textbf{434} & \textbf{391} \\
        \hline
    \end{tabular}
    \vspace{-4mm}
    \end{table}
    We evaluate the per-user latency performance of the proposed method by comparing it against compute-based re-prefill baselines.
    In Table~\ref{tab:user_latency}, \emph{Re-prefill} reconstructs the context directly from the raw tokens, whereas \emph{Hybrid} transmits both tokens and KV cache to balance the prefill processing time and cache transmission latency, a scheme appropriately modified from~\cite{Lee2026low-latency} for our evaluation.
    In the hybrid baseline, KV cache importance is not taken into account, and the backhaul bandwidth is allocated entirely in a first-come, first-served (FCFS) manner.
    Because the data size of the tokens is negligible compared to the KV cache, we do not consider token transmission latency.
    
    For comparison, we generate user arrivals following a Poisson process with $\rho=4$ over the time window $t\in[0,1]$\SI{}{s}, and repeat this process 1,000 times.
    For each user, the service latency is measured from the moment of request arrival.
    Specifically, for the re-prefill and hybrid baselines, latency spans until the context is fully restored; for the proposed method, it lasts until a sufficient amount of the KV cache is transferred to reach \SI{99}{\percent} of the full accuracy.
    Here, the transfer window $T_{\max}$ is set to $\infty$, so that no scheme is truncated by the deadline.
    We assume that both the re-prefill and hybrid baselines are provisioned with sufficient GPUs to process each user request independently, so their latencies are free of compute contention.
    This idealized assumption establishes a much stricter baseline than a direct comparison with~\cite{Lee2026low-latency}.
    
    At a context length of 4K, the proposed method shows comparable performance to the baselines.
    However, as the context length increases, the proposed method outperforms the baselines, which suffer from severe computational overhead during full context recomputation.
    Moreover, the proposed method can further reduce latency by leveraging its anytime property, while ensuring graceful performance degradation.
    
    \subsection{Discussion} \label{subsec:exp-discussion}
    The reported gains require importance-ordered transmission, which adds two source-side costs and a small wire overhead.
    (i) \emph{Scoring}: assigning a Fast KVzip~\cite{Kim2026fastkvzip} score to every KV entry is a one-time, per-session computation done while the source still serves the user, not at handover.
    (ii) \emph{Sorting}: reordering by score is an $O(n\log n)$ sort over indices, a small fraction of the transfer time.
    (iii) \emph{Metadata}: because the transfer may be truncated, each entry carries its 3-byte $(\text{layer},\text{head},\text{token})$ coordinate against the \SI{0.5}{KB} payload of Section~\ref{subsec:system-ordering}, under \SI{1}{\%} overhead.

    The proposed allocator optimizes each slot rather than the horizon.
    To bound the resulting loss, we compare against a clairvoyant upper bound that observes all future arrivals and maximizes the objective without the concave-region restriction.
    Solved to certified global optimality via a piecewise-linear MILP with a validated upper correction, this bound never underestimates the true continuous optimum.
    The proposed allocator attains \qtyrange{98.2}{99.5}{\%} of it across all loads (within \SI{0.53}{\%} at $\rho=1$, \SI{1.77}{\%} at $\rho=8$).
    Thus, neither the myopic objective nor the concave-region restriction costs much in practice, though the widening gap at higher load suggests horizon-aware scheduling as a direction for future work.
    
    \section{Conclusion}\label{sec:conclusion}
    
    We presented the ImpactHO framework, an importance-aware framework for maximizing the average inference accuracy across users during edge LLM handover subject to limited backhaul bandwidth. 
    ImpactHO combines three key components:
    (1) importance-ordered sequential KV cache transfer, which exploits sparsity in token-level cache importance to transmit the most useful entries first;
    (2) an empirically validated sigmoid characterization of inference accuracy with partially transferred caches; and
    (3) an optimal weighted water-filling allocator whose homogeneous special case reduces to classical water-filling.
    In realistic edge regimes, importance-ordered transfer attains near-full-cache accuracy with lower latency than baselines.
    Moreover, across a range of slot durations, backhaul bandwidths, and concurrent handover loads, the proposed allocator consistently achieves higher average accuracy than the baselines. 
    These results confirm that jointly accounting for KV cache importance and backhaul resource allocation enables accurate and efficient LLM handover at the network edge.

\appendices
\section{Proof of Theorem~\ref{thm:closed-form}}
\label{app:proof}

    Recall the effective lower bound $\tilde{x}_i = \max(\tau_i,\, x_i)$ from Section~\ref{sec:system}, which collapses the box constraint of~\eqref{eq:obj-std} to the single interval $y_i \in [\tilde{x}_i, 1]$.
    We work in the feasibility regime $B \ge B_{\min}$ of Lemma~\ref{lem:feasibility}.
    By Assumption~\ref{asm:operable}, $-A_i$ is strictly convex on $[\tau_i, 1]$ and the constraint set is a polytope, so Problem~\eqref{eq:obj-std} is convex with a strictly convex objective.
    The Lagrangian is
    \begin{align}
        \mathcal{L}(\mathbf{y}, \lambda, \boldsymbol{\mu}, \boldsymbol{\nu})
        =& -\sum_{i=1}^{N} A_i(y_i) + \lambda \!\left( \sum_{i=1}^{N} \frac{L_i(y_i - x_i)}{\Delta t} - B \right) \nonumber \\
        & + \sum_{i=1}^{N} \mu_i (\tilde{x}_i - y_i) + \sum_{i=1}^{N} \nu_i (y_i - 1).
        \label{eq:full-lagrangian}
    \end{align}
    The KKT conditions for the optimal 
    $(\mathbf{y}^\star, \lambda^\star, \boldsymbol{\mu}^\star, \boldsymbol{\nu}^\star)$ are as 
    follows:
    \begin{align}
        & \sum_{i=1}^{N} \frac{L_i(y_i^\star - x_i)}{\Delta t} \le B, \; \lambda^\star \ge 0,
        \label{eq:kkt-1} \\
        & \lambda^\star \!\left( \sum_{i=1}^{N} \frac{L_i(y_i^\star - x_i)}{\Delta t} - B \right) = 0,
        \label{eq:kkt-2} \\
        & \tilde{x}_i \le y_i^\star \le 1, \; \mu_i^\star, \nu_i^\star \ge 0,
        \label{eq:kkt-3} \\
        & \mu_i^\star (\tilde{x}_i - y_i^\star) = 0, \; \nu_i^\star (y_i^\star - 1) = 0
        \label{eq:kkt-4}
    \end{align}
    for $i \in \{1, \ldots, N\}$.
    From $\partial \mathcal{L}/\partial y_i = 0$, the stationarity condition gives
    \begin{equation}
        A_i'(y_i^\star) = \frac{\lambda^\star L_i}{\Delta t} - \mu_i^\star + \nu_i^\star.
        \label{eq:foc}
    \end{equation}

    By Assumption~\ref{asm:operable}, $A_i'$ is strictly decreasing on $[\tau_i,1]$, hence invertible.
    Stationarity~\eqref{eq:foc} with complementary slackness~\eqref{eq:kkt-4} gives the three branches of~\eqref{eq:general-waterlevel}: the interior case $\mu_i^\star=\nu_i^\star=0$ yields $A_i'(y_i^\star)=s_i(\lambda^\star)$, while $\nu_i^\star>0$ and $\mu_i^\star>0$ activate the bounds $W_i=1$ and $W_i=\tau_i$ when $s_i(\lambda^\star)\le A_i'(1)$ and $s_i(\lambda^\star)\ge A_i'(\tau_i)$, respectively.
    Combining with $y_i^\star\ge x_i$ from $b_i^\star\ge0$ gives
    \begin{equation}
        y_i^\star \;=\; \max\{x_i,\, W_i(\lambda^\star)\}.
        \label{eq:y-star-eff}
    \end{equation}
    It remains to characterize $\lambda^\star$.
    Define the aggregate demand
    \begin{equation}
        g(\lambda) \;\triangleq\; \sum_{i=1}^{N} \frac{L_i\, (y_i^\star(\lambda) - x_i)}{\Delta t}
        \label{eq:demand-fn}
    \end{equation}
    with $y_i^\star(\lambda)$ given by~\eqref{eq:y-star-eff}.
    Because $A_i'$ is strictly decreasing on $[\tau_i, 1]$, $W_i(\lambda)$ is strictly decreasing in $\lambda$ on its interior region and constant elsewhere; taking $\max\{x_i,\cdot\}$ preserves this monotonicity, so $g(\lambda)$ is continuous and non-increasing on $\lambda \ge 0$, with boundary values $g(0) = \sum_i L_i (1 - x_i)/\Delta t$ and $g(\bar{\lambda}) = B_{\min}$, where $\bar{\lambda} \triangleq \max_i A_i^\prime (\tau_i)\Delta t/L_i$.
    By the intermediate value theorem, a $\lambda^\star\ge0$ with $g(\lambda^\star)=B$ exists for every $B\in[B_{\min}, g(0)]$, and for $B> g(0)$ the budget is slack with $\lambda^\star=0$; $\lambda^\star$ is unique when some user is strictly interior, and otherwise any such $\lambda^\star$ yields the same $\mathbf{y}^\star$, which is unique by strict convexity.
    This completes the proof of Theorem~\ref{thm:closed-form}.

\bibliographystyle{IEEEtran}
\bibliography{abrv,mybib}

\end{document}